\newcommand{\Section}[1]{\section{#1}}
\newtheorem{theorem}{Theorem}[section]
\newtheorem{definition}{Definition}[section]
\newtheorem{claim}[theorem]{Claim}
\newtheorem{corollary}[theorem]{Corollary}
\begin{document}
\begin{titlepage}
\begin{center} 
{\Large {\sc Data Interpolation}\\ 
\vspace{0.3cm}
An Efficient Sampling Alternative for Big Data Aggregation}\\
\vspace{1.3cm}
by\\
\vspace{1.3cm}
\textbf{Hadassa Daltrophe, Shlomi Dolev and Zvi Lotker}\\
\vspace{1.3cm}
Technical Report \#13-01\\
\vspace{1.3cm}
September 2012
\end{center}
\end{titlepage}

\renewcommand{\thefootnote}{\fnsymbol{footnote}}
\title{{\sc Data Interpolation}\\ 
An Efficient Sampling Alternative for Big Data Aggregation\\
{\large (Technical Report)} }

\author{
Hadassa Daltrophe\footnotemark[1] 
\and 
Shlomi Dolev\footnotemark[2]
\and  
Zvi Lotker\footnotemark[3]
}
\maketitle

\date{}

\footnotetext[1]{
Department of Computer Science,
Ben-Gurion University, Beer-Sheva, 84105, Israel.
Email: {\tt{hd}@cs.bgu.ac.il}.
 Partially supported by a Russian Israeli grant from the Israeli
Ministry of Science and Technology and the Russian Foundation for
Basic Research, the Rita Altura Trust Chair in Computer Sciences, the
Lynne and William Frankel Center for Computer Sciences, Israel Science
Foundation (grant number 428/11), Cabarnit Cyber Security MAGNET
Consortium, Grant from the Institute for Future Defense Technologies
Research named for the Medvedi of the Technion, MAFAT, and Israeli
Internet Association
}
\footnotetext[2]{
Department of Computer Science,
Ben-Gurion University, Beer-Sheva, 84105, Israel.
Email:{\tt{dolev}@cs.bgu.ac.il}.
Partially supported by Deutsche Telekom, Rita Altura Trust Chair in
Computer Sciences, Lynne and William
Frankel Center for Computer Sciences, Israel Science Foundation (grant
number 428/11) and Cabarnit Cyber
Security MAGNET Consortium.
}
\footnotetext[3]{
Department of Communication Systems Engineering,
Ben-Gurion University, Beer-Sheva, 84105, Israel.
Email: {\tt{zvilo}@bgu.ac.il}.
}

\renewcommand{\thefootnote}{\arabic{footnote}}
\thispagestyle{empty}
\begin{abstract}
Given a large set of measurement sensor data, in order to identify a simple function that
captures the essence of the data gathered by the sensors, we suggest representing the data by
(spatial) functions, in particular by polynomials. Given a (sampled) set of values, we interpolate
the datapoints to define a polynomial that would represent the data. The interpolation is
challenging, since in practice the data can be noisy and even Byzantine, where the Byzantine
data represents an adversarial value that is not limited to being close to the correct measured
data. We present two solutions, one that extends the Welch-Berlekamp technique in the case of
multidimensional data, and copes with discrete noise and Byzantine data, and the other based on Arora and Khot techniques, extending them in the case of multidimensional noisy
and Byzantine data.
\end{abstract}


\newpage
\setcounter{page}{1}
\pagenumbering{arabic}
\Section{Introduction}
\label{s:Introduction}
Consider the task of representing information in an error-tolerant way, such that it can be introduced even if it contains noise or even if the data is partially corrupted and destroyed. Polynomials are a common venue for such approximation, where the goal is
to find a polynomial $p$ of degree at most $d$ that would
represent the entire data correctly. 

Our motivation comes from sensor data aggregation, and the need to extend the distributed aggregation to distributed interpolation, use sampling to cope with huge data and anticipate the value of missing data. For example, a sensor network may interact with the physical environment, while each node in the network is may sense the surrounding environment (e.g., temperature, humidity etc). The environmental measured values should be transmitted to a remote repository or remote server. Note that the environmental values usually contain noise, and
there can be malicious inputs, i.e., part of the data may be corrupted. 

In contrast to distributed data aggregation where the resulting computation is a function such as COUNT, SUM and AVERAGE (e.g.~\cite{Aggregation2006,Aggregation2007,Aggregation2011}),
in distributed data interpolation, our goal is to represent every value of the data by a single (abstracting) function. Our computational model consists of sampling the sensor network data and estimating the missing information using polynomial manipulations.

The management of big data systems also gives motivation for the distributed interpolation
method. The abstraction of big data becomes one of the most important tasks in the presence of
the enormous amount of data produced these days. Communicating and analyzing the entire data
does not scale, even when data aggregation techniques are used. This study suggests a method to
represent the distributing big data by a simple abstract function (such as polynomial) which will lead
to effective use of that data.

We suggest interpolating the big data in the scope of distributed systems by using local {\it data centers}. Each data center samples the data around it and computes a polynomial that reflects the local data. The local polynomials are merged to a global one by interpolation in a hierarchical manner. In the process of calculating the local polynomials noise and Byzantine data samples are eliminated.



\subsection*{Basic Definitions.}

\begin{itemize}
\item For {\it multivariate polynomial} $p(\textbf{\texttt{x}})\in \mathbb{R}[\textbf{\texttt{x}}]=\mathbb{R}[x_1,...,x_k]$ let $\left\|p\right\|_\infty =sup\left\{\left|p(x_1,...,x_k)\right|:x_1,...,x_k\in \mathbb{R}\right\}$.

\item A {\it monomial} in a collection of variable $x_i,...,x_n$ is a product
\begin{equation}
	x_1^{\alpha_1}x_2^{\alpha_2}x_n^{\alpha_n} \nonumber
\end{equation}
where $\alpha_i$ are non-negative integers.

\item The {\it total degree} of a multivariate polynomial $p$ is the maximum degree of any term in $p$, where the degree of particular term is the sum of the variable exponents.

\item A polynomial $q$ is a $\delta${\it-approximation} to $p$ if $\left\|p-q\right\|_\infty \leq \delta$.

\end{itemize}

\subsection*{Polynomial Fitting to Noisy and Byzantine Data.}
Formally, in this paper, we learn the following problem:

\begin{definition}[Polynomial Fitting to Noisy and Byzantine Data Problem]
\label{d:mainProb}
Given a sample $S$ of $k$ dimension datapoints $\left\{\left(x_{1_i},...,x_{k_i}\right)\right\}^{N}_{i=1}$ and a function $f$ defined on those points $f(x_{1_i},...,x_{k_i})=y_i$, a noise parameter $\delta>0$ and a threshold $\rho>0$, we have to find a polynomial $p$ of total degree $d$ satisfying 
\begin{eqnarray}
\label{eq:mainDef}
p(x_{1},...,x_{k}) \in \left[y-\delta,y+\delta\right] 
 \text{ for at least $\rho$ fraction of $S$}
\end{eqnarray}
\end{definition}

Generally, we propose the use of polynomials to represent large amounts of sensor data. The process works by sampling the data and then using this sample to construct a polynomial whose distance (according to the $\ell_\infty$ metric) from the polynomial constructed using the \textit{whole} data set is small. The main challenges to this approach are $(i)$ the presence of noise (identified by the $\delta$ parameter), and $(ii)$ arbitrarily corrupted data (Byzantine data, denoted by $\rho$) that can cause inaccurate sampling and, thus, lead to badly constructed polynomials.
 
Given that the function $f$ is continuous, by the Weierstrass approximation Theorem~\cite{Weierstrass} we know that for any given $\epsilon>0$, there exists a polynomial $p'$ such that 
\begin{equation}
	\left\|f-p'\right\|_\infty<\epsilon
\end{equation}
This can tell us that our desired polynomial $p$ exists (i.e., $p'=p$ and $\epsilon=\delta$, satisfying eq.\ref{eq:mainDef}), and we can relate the data as arising from polynomial function (i.e., the unknown function $f$ is $d$ degree polynomial we need to reconstruct), and this is the underlying model assumed in the paper.

One obvious candidate to construct approximating polynomial is interpolation at equidistant points. However, the sequence of interpolation polynomials does \textit{not} converge uniformly to $f$ for all $f \in C[0, 1]$ due to Runge's phenomenon~\cite{Davis1975}. Chebyshev interpolation (i.e., interpolate $f$ using the points defined by the Chebyshev polynomial) minimizes Runge's oscillation, but it is not suffice the polynomial fitting problem presented above (Definition~\ref{d:mainProb}) due to the randomly distributed data we have assumed.

Taylor polynomials are also not appropriate; for even setting aside questions of convergence, they are applicable only to
functions that are \textit{infinitely differentiable}, and not to all continuous functions.
 
Another classical polynomial sequence is suggested by S. Bernstein~\cite{B1912} as constructive proof of the Weierstrass Theorem. Bernstein polynomial: $B^{f}_{n}(x)=\displaystyle\sum^{n}_{i=0}f\left(\frac{i}{n}\right){n\choose i} x^i(1-x)^{n-i}$ converges uniformly to any continuous function $f$ which is bounded on $[0,1]$. The formal Berenstein polynomial samples the function $f$ in an \textit{equidistant} fashion. To handle a random sample data, we can use Vitale~\cite{Vitale1975} results which consider that the datapoints $S=x_1,...,x_N$ are \textit{i.i.d} observations drawn from an unknown \textit{density} function $f$. The \textit{Bernstein polynomial estimate of $f$} defined as 
$\tilde{B}^{f}_{n}(x) = \frac{n+1}{N} \displaystyle\sum^{n}_{i=0}\mu_{in}^N{n\choose i} x^i(1-x)^{n-i}$
where $\mu_{in}^N$ is the number of points ($x_i$'s) appear in the interval $[\frac{i}{n+1},\frac{i+1}{n+1}]$. 
Vitale~\cite{Vitale1975} showed that $\left\|\tilde{B}^{f}_{n}(x) - f\right\|_\infty\leq \epsilon$ for every given $\epsilon > 0$.

Tenbusch~\cite{Tenbusch1994} extended Vitale's idea to multidimensional densities, where there is need to note that those works hold only when the datapoints are \textit{i.i.d} observations. Another reason not to use the Bernstien polynomial is the slow convergence rate (Voronovskaya's Theorem states that for functions that are twice differentiable, the rate of convergence is about $1/n$, see Davis~\cite{Davis1975}). 

Considering other classical curve-fitting and approximation theories~\cite{Rivlin1969}, most research has used the $\ell_2$ norm of noise, such as the method of least square errors. These attitudes not suffice the adversarial noise we have assumed here. To our knowledge, only~\cite{AroraKhot2002} referred the $\ell_\infty$ noise that fits our considered problem and we further relate~\cite{AroraKhot2002} study. 

The polynomial fitting problem as stated in Definition~\ref{d:mainProb} can also be studied by Error-Correcting Code Theory. From that point of view, extensive literature exists dealing with the
noise-free case (i.e., $\delta = 0$ and $\rho<1$). In the next section, we present an algorithm that handles a combination of discrete noise and Byzantine data based on the Welch-Berlekamp~\cite{WelchBerlekamp1986} error-elimination method. Moreover, the fundamental Welch-Berlekamp algorithm treats only the one-dimension case, where we suggest a means to deal with corrupted-noisy data appearing at one and multi-dimensional inputs.

Related to unrestricted noise, we refer to the polynomial-fitting problem as defined by
Arora and Khot~\cite{AroraKhot2002}. Based on their results in Section~\ref{s:MultidimensionalDataFitting}, we introduce the polynomial fit generalization, where we provide a polynomial-time algorithm dealing with multi-variate data.

Summarizing, this work provides the following contributions:
\begin{itemize}
\item We describe an algorithm that constructs a polynomial using the Welch-Berlekamp (WB) method as a subroutine. The algorithm is tolerated to discrete-noise and Byzantine data.
 
\item  We identify how the previous method can be generalized to handle multi-dimensional data. Moreover, we present a multivariate analogue of the WB method, under conditions which will be specified.

\item Using linear programing minimization and the Markov-Bernstein Theorem, we generalized Arora and Khot algorithm to reconstruct an unknown \textit{multi-dimensional} polynomial. Furthermore, we detail the way to eliminate the Byzantine appearance when such inputs exist.

\end{itemize}

Those three points stated in the three algorithms presented in the paper. The first Algorithm handles one-dimensional Byzantine data that contains discrete-noise. Algorithm~\ref{algo:2} generalized the WB idea to deal with multivariate malicious data. Finally, Algorithm~\ref{algo:3} summarized our approach to cope with unrestricted noise appeared in the (partially corrupted) data.


\Section{Discrete Finite Noise}
\label{s:DiscreteFiniteNoise}
In this section, we will study a simple aspect of the polynomial fitting problem posed in Definition~\ref{d:mainProb}, where the data function is a polynomial, and we relaxed the noise constraint to be finite and discrete, i.e., the noise $\delta$ is defined on a finite field $\mathbb{F}_q$ containing $q$ elements.

Welch and Berlekamp related the problem of polynomial reconstruction in their decoding algorithm for Reed Solomon codes~\cite{WelchBerlekamp1986}. The main idea of their algorithm is to describe the received (partially corrupted) data as a ratio of polynomials. Their solution holds for noise-free cases and a limited fraction of the corrupted data ($\delta=0,\rho>1/2$). Almost $30$ years later, Sudan's list decoding algorithm~\cite{Sudan1997} relaxed the Byzantine constraint ($\delta=0,\rho$ can be less than $1/2$) by using bivariate polynomial interpolation. Those concepts do not hold up well in the noisy case since they use the roots of the polynomial and the divisibility of one polynomial by other methods that are problematic for noisy data (as shown in~\cite{AroraKhot2002}, Section $1.2$). Here, we will use the WB algorithm~\cite{WelchBerlekamp1986} as a ``black box" to obtain an algorithm that handles the discrete-noise notation of the polynomial-fitting problem.

Given a data set $\left\{\left(x_i,y_i\right)\right\}^{N}_{i=1}$ that is within a distance of $t=\rho N$ from some polynomial $p(x)$ of degree $< d$, the WB approach to eliminate the irrelevant data is to use the roots of an object called \textit{the error-locating polynomial}, $e$. In other words, we want $e(x_i)=0$ whenever $p(x_i)\neq y_i$. This is done by defining another polynomial
$q(x)\equiv p(x)e(x)$. To resolve these polynomials we need to solve the linear system, $q(x_i)=y_ie(x_i) \text{ for all $i$.}$

Welch and Berlekamp show that $e(x)|q(x)$ and $p(x)$ can be found by the ratio $p(x)=q(x)/e(x)$ at $O(N^\omega)$ running time (where $\omega$ is the matrix multiplication complexity). In Algorithm~\ref{algo:1}, we are use the WB method as a subroutine to manage the noisy-corrupted data.

\begin{algorithm}
\caption{Reconstruct the polynomial $p(x)$ representing the true data }
\begin{algorithmic}
\label{algo:1}
\REQUIRE $S,S'\subseteq S,\rho,d,\delta,\Delta={v_1,...,v_{|S'|}}$
\STATE $i \leftarrow 0$
\REPEAT
\STATE $i \leftarrow i+1$
\STATE $S'_i \leftarrow  S'+v_i $
\STATE $p_i(x) \leftarrow WB(S'_i,d,\rho)$
\UNTIL{$p_i(x_j)\in [y_j-\delta,y_j+\delta]$ for at least $\rho$ fraction of $j$'s; $(x_j,y_j) \in S-S'_i$ }
\RETURN $p(x)\leftarrow p_i(x)$
\end{algorithmic}
\end{algorithm}

Given \textit{any} sample $S$ such that $\rho$ fraction of $S$ is not corrupted, we will choose a subset $S'\subseteq S$ in a size related to the desired degree $d$ and $\rho$ (the WB algorithm requires $2t+d$ points, where $t=\rho N$ is the number of the corrupted points). At every step $i$, we will add $S'$ different values of noise as defined by the set $\Delta$ which contain all the vectors of length $|S'|$ assigned the elements of $\mathbb{F}_q$ in lexicographic order, i.e., $\Delta=\left\{(a_1,...,a_{|S'|}):a_i\in \mathbb{F}_q\right\}$. Now, we can reconstruct the polynomial $p_i$ using the WB algorithm. The resulting polynomial $p_i$ is tested by the original dataset $S$, where the criteria is that $p_i$ is within $\delta$ from all nodes but the Byzantine nodes (according to the maximal number of Byzantine as defined by $\rho$). 

Since we assume a discrete finite noise ($\delta \in \mathbb{F}_q$), for each datapoint at the subset $S'$ (of size $O(d+\rho|S|)$), there is a possibly of $q$ values (where $q$ is a constant). Thus, in the worst case, when we run the WB polynomial algorithm for every possible value, it will cost $poly(d+N)$ time.

Note that if the desired polynomial's degree $d$ is not given, we can search for the minimal degree of a polynomial that fits the $\delta$ and number of Byzantine node restrictions in a binary search fashion. 

\subsection*{Multidimensional Data.}
To generalize the former algorithm to handle multidimensional data, there is need to formalize the WB algorithm to deal with multivariate polynomials. This is a challenging task due to the {\it infinite} roots those polynomials may have (and as previously mentioned, the WB method is strictly based on the polynomials' roots). 

A suggested method to handle $3$-dimensional data is to assume that the values of datapoints in one direction (e.g., x-direction) are distinct. This can be achieved by assuming the inputs $S=(x_1,y_1,f(x_1,y_1))...,(x_N,y_N,f(x_N,y_N))$ are \textit{i.i.d} observations. Moreover, we allow the malicious authority to change the observation input but not its distribution (i.e., to determine $z_i=f(x_i,y_i)$ value only). This assumption forces the data to have different $x_i$'s values, which help us to define the error locating polynomial $e$ in the x-direction only (or symmetrically over the y-axis). 

The 3-dimensional polynomial reconstruction is described in Algorithm~\ref{algo:2}.
\begin{algorithm}
\caption{Reconstruct the polynomial $p(x,y)$ representing the true data }
\label{algo:2}
\begin{itemize}
\item \textbf{Input}: $0<t=\rho N$ which is the Byzantine appearance bound, the total degree $d>1$ of the goal polynomial and $N$ triples ${(x_i,y_i,z_i)}_{i=1}^N$ with distinct $x_i$'s .
\item \textbf{Output}: Polynomial $p(x,y)$ of total degree at most $d$ or fail.
\item \textbf{Step 1}: Compute a non-zero univariate polynomial $e(x)$ of degree exactly $t$ and a bivariate polynomial $q(x,y)$ of total degree $d+t$ such that:
\begin{eqnarray}
\label{eq:bw3}
	z_ie(x_i)= q(x_i,y_i) & 1\leq i \leq N
\end{eqnarray} 
If such polynomials do not exist, output fail.
\item \textbf{Step 2}: If $e$ does not divide $q$, output fail, else compute $p(x,y) = \displaystyle\frac{q(x,y)}{e(x)}$. 
If $\Delta(z_i,p(x_i,y_i)_i)>t$, output fail. else output $p(x,y)$.
\end{itemize}
\end{algorithm}

\begin{theorem}
\label{t:3DBW}
Let $p$ be an unknown $d$ total degree polynomial with two variables. Given a threshold $\rho>0$ and a sample $S$ of $N={d+t+m \choose d+m}+t$ ($t=\rho N$) random points ${(x_i,y_i,z_i)}_{i=1}^N$ such that $$z_i=p(x_i,y_i)\text{ for at least $\rho$ fraction of $S$.}$$ The algorithm above reconstructs $p$ at $O(N^\omega)$ running time (where $\omega$ is the matrix multiplication complexity). 
\end{theorem}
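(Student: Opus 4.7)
The plan is to mimic the classical Welch--Berlekamp argument, replacing univariate root-counting by a bivariate general-position argument that is saved by the distinct-$x_i$ hypothesis and by the randomness of the sample. I would split the argument into three claims: (a) the linear system of Step~1 always admits a nonzero solution, (b) every such solution satisfies $e\mid q$ and $q/e=p$, and (c) the overall cost is that of a single linear solve of size $O(N)$. For (a) I would exhibit an explicit witness. Let $B\subseteq\{1,\dots,N\}$ be the set of Byzantine indices (size at most $t$), pad $B$ with arbitrary honest indices to reach size exactly $t$, and set $e^{\ast}(x)=\prod_{i\in B}(x-x_i)$, which is univariate of degree exactly $t$ precisely because the $x_i$ are distinct. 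Taking $q^{\ast}(x,y)=p(x,y)\,e^{\ast}(x)$, of total degree $d+t$, the identity $z_i e^{\ast}(x_i)=q^{\ast}(x_i,y_i)$ holds on honest $i$ (because $z_i=p(x_i,y_i)$) and on Byzantine $i$ (because $e^{\ast}(x_i)=0$), so (\ref{eq:bw3}) is satisfied.

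For (b) the key gadget is the error polynomial
\[
R(x,y)\;=\;q(x,y)-p(x,y)\,e(x),
\]
which has total degree at most $d+t$. Equation~(\ref{eq:bw3}) forces $R(x_i,y_i)=0$ at every honest index, i.e.\ at a set of at least $N-t=\binom{d+t+m}{d+m}$ sample points. Since this number equals the dimension of the space of bivariate polynomials of total degree $\le d+t$, I would argue that for points drawn from a continuous distribution the corresponding evaluation map is injective with probability one (the relevant Vandermonde-like determinant is a nonzero polynomial in the sample coordinates), so $R\equiv 0$. That immediately yields $q=p\,e$, hence $e\mid q$ and the quotient returned in Step~2 is exactly $p$; the final $\Delta\!>\!t$ check is only a safeguard against the measure-zero bad configurations. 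For (c), Step~1 is a homogeneous linear system over $(t+1)+\binom{d+t+m}{d+m}=O(N)$ unknowns with $N$ equations, solved by Gaussian elimination in $O(N^{\omega})$ time; Step~2 costs only a univariate polynomial division plus $N$ evaluations, which is lower order.

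The main obstacle I anticipate is precisely the general-position step in (b). In one variable ``too many roots forces zero'' is immediate, but in two variables a nonzero polynomial can vanish along an entire curve, so the conclusion depends essentially on the random-sample hypothesis together with the distinct-$x_i$ assumption. Making that step rigorous --- e.g.\ by checking that the determinant of the evaluation matrix on $\binom{d+t+m}{d+m}$ generic points is a nonzero polynomial in the sample coordinates, and invoking the Schwartz--Zippel principle to conclude it is almost surely nonzero --- is where the real work sits; the remainder of the argument is a direct lift of the univariate WB proof, combined with the degree bookkeeping above.
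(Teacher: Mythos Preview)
Your three-claim decomposition matches the paper's structure exactly, and your parts (a) and (c) are essentially identical to the paper's Claims~\ref{c:WB1} and the time-complexity claim. The interesting divergence is in (b). The paper argues \emph{uniqueness of the ratio}: given any two solutions $(q_1,e_1)$ and $(q_2,e_2)$ of Step~1, it shows that $q_1e_2-q_2e_1$ vanishes at all $N$ sample points and concludes $q_1/e_1=q_2/e_2$. You instead compare an arbitrary solution directly against the truth via $R=q-pe$, which vanishes at the $N-t$ honest points. Your route has two concrete advantages. First, the degree bookkeeping is tighter: your $R$ has total degree at most $d+t$, so the number of honest points $N-t=\binom{d+t+2}{2}$ is exactly the dimension of the relevant polynomial space, and a generic-position/Schwartz--Zippel argument suffices; the paper's polynomial $q_1e_2-q_2e_1$ has total degree $d+2t$, and $N$ is strictly smaller than $\binom{d+2t+2}{2}$, so the paper's final implication is not fully justified by a pure dimension count. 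Second, you explicitly isolate and address the general-position step, which the paper leaves implicit. What the paper's route buys is that it never needs to know which indices are honest (it uses all $N$ equations symmetrically), whereas your argument implicitly partitions the sample --- though, as you note, this is handled by observing that every size-$(N-t)$ subset of a continuous sample is almost surely in general position. Overall your argument is a correct and somewhat cleaner variant of the paper's.
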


\begin{proof}

The proof of the Theorem above follows from the subsequent claims.

\begin{claim}[Correctness]
\label{c:WB1}
There exist a pair of polynomials $e(x)$ and $q(x,y)$ that satisfy \textbf{Step 1} such that $q(x,y)=p(x,y)e(x)$.
\end{claim}
\begin{proof}
Taking the error locator polynomial $e(x)$ and $q(x,y)=p(x,y)e(x)$, where $deg(q) \leq deg(p) + deg(e) \leq t + d$. By definition, $e(x)$ is a degree $t$ polynomial with the following property:
$$
e(x) = 0 \text{ iff } z_i\neq p(x,y)
$$
We now argue that $e(x)$ and $q(x,y)$ satisfy eq.~\ref{eq:bw3}. Note that if $e(x_i)=0$, then $q(x_i,y_i)=z_ie(x_i)=0$. When $e(x_i)\neq 0$, we know $p(x_i,y_i)=z_i$ and so we still have $p(x_i,y_i)e(x_i)=z_ie(x_i)$, as desired.
\end{proof}

\begin{claim}[Uniqueness]
\label{c:WB2}
If any two distinct solutions $(q_1(x,y),e_1(x))\neq(q_2(x,y),e_2(x))$ satisfy\textbf{ Step 1}, then
they will satisfy $\displaystyle\frac{q_1(x,y)}{e_1(x)}=\frac{q_2(x,y)}{e_2(x)}$.
\end{claim}
\begin{proof}
It suffices us to prove that $q_1(x,y)e_2(x)=(q_2(x,y)e_1(x)$. Multiply this with $z_i$ and substitute $x,y$ with $x_i,y_i$, respectively,
$$
q_1(x_i,y_i)e_2(x_i)z_i=q_2(x_i,y_i)e_1(x_i)z_i
$$
We know, $\forall i \in [N]$ $q_1(x_i,y_i)=e_1(x_i)z_i$ and $q_2(x_i,y_i)=e_2(x_i)z_i$
If $z_i=0$, then we are done. Otherwise, if $z_i\neq 0$, then $q_1(x_i,y_i)=0, q_(x_i,y_i)=0\Rightarrow q_1(x,y)e_2(x)=(q_2(x,y)e_1(x)$ as desired.
\end{proof}

\begin{claim}[Time complexity]
Given $N=t+{d+t+2 \choose d+t}$ data samples, we can reconstruct $p(x,y)$ using $O(N^\omega)$ running time.
\end{claim}
\begin{proof}
\renewcommand{\qedsymbol}{}
Generally, for $m$ variate polynomial with degree $d$, there are ${d+m \choose d}$ terms~\cite{Saniee2008}; thus, it is a necessary condition that we have $t+{d+t+2 \choose d+t}$ distinct points for $q$ and $e$ to be uniquely defined. We have $N$ linear equation in at most $N$ variables, which we can solve e.g., by Gaussian elimination in time $O(N^\omega)$ (where $\omega$ is the matrix multiplication complexity). 

Finally, \textbf{Step 2} can be implemented in time $O(NlogN)$ by "long division"~\cite{Alfred1974}. Note that the general problem of deciding whether one multivariate polynomial divides another is related to computational algebraic geometry (specifically, this can be done using the Gr\"{o}bner base). However, since the divider is a univariate polynomial, we can mimic long division, where we consider $x$ to be the ``variable" and $y$ to just be some ``number."
\end{proof}
\end{proof}

\textbf{Example 1.} Suppose the unknown polynomial is $p(x,y)=x+y$. Given the parameters: $d=1$ (degree of $p$), $m=2$ (number of variable at $p$) and $t=1$ (number of corrupted inputs) and the set of $t+{d+t+2 \choose d+t}=7$ points:
$$
\text{\small{(1,2,2),(2,2,4),(6,1,7),(4,3,7),(8,2,0),(9,1,10),(3,7,10)}}
$$
that lie on $z=p(x,y)$. Following the algorithm, we define: $deg(e)=1,deg(q)=2$ and 
\begin{equation}
q_i =
\alpha_1x_i^2+\alpha_2x_iy_i+\alpha_3y_i^2+\alpha_4x_i+\alpha_5y_i+\alpha_{6} = z_i(x_i+\alpha_{7}) \nonumber 
\end{equation}
for coefficients $\alpha_1,...,\alpha_6,\beta$ and $1 \leq i \leq 12$. Note that we force $e(x)$ not to be the zero polynomial by define it to be monic (i.e., the leading coefficient equals to 1). Thus, we derive the linear system:
\begin{eqnarray}
\alpha_1 + \alpha_2 + \alpha_3 + \alpha_4 + \alpha_5 + \alpha_6 = 2\beta + 2 \nonumber\\
4\alpha_1 + 4\alpha_2 + 4\alpha_3 + 2\alpha_4 + 2\alpha_5 + \alpha_6 = 4\beta + 8 \nonumber\\
36\alpha_1 + 6\alpha_2 + \alpha_3 + 6\alpha_4 + \alpha_5 + \alpha_6 = 7\beta + 42 \nonumber\\
16\alpha_1 + 12\alpha_2 + 9\alpha_3 + 4\alpha_4 + 3\alpha_5 + \alpha_6 = 7\beta + 28\nonumber\\
64\alpha_1 + 16\alpha_2 + 4\alpha_3 + 8\alpha_4 + 2\alpha_5 + \alpha_6 = 0\nonumber\\
81\alpha_1 + 9\alpha_2 + \alpha_3 + 9\alpha_4 + \alpha_5 + \alpha_6 = 10\beta + 90\nonumber\\
9\alpha_1 + 21\alpha_2 + 49\alpha_3 + 3\alpha_4 + 7\alpha_5 + \alpha_6 = 10\beta + 30\nonumber
\end{eqnarray}
Solving the system gives: $q(x,y)= x^2+xy-8x-8y$ and $e(x)=x-8$. Dividing those polynomials, we get the expected solution: $q(x,y)/e(x)=p(x)=x+y$.

\begin{corollary}[Multivariate Polynomial Reconstruction]
Let $p$ be an unknown $d$ total degree polynomial with $m$ variable. Given a threshold $\rho>0$, a noise parameter $\delta$ and a sample $S$ of $N$ random points ${(x_{1_i},...,x_{m_i},y_i)}_{i=1}^N$ such that 
$$
y_i \in [p(x_{1_i},...,x_{m_i})-\delta,p(x_{1_i},...,x_{m_i})+\delta]\text{ for at least $\rho$ fraction of $S$}
$$
$p$ can be reconstructed using $N={d+m+\rho m \choose d+m}+\rho m$ datapoints.
\end{corollary}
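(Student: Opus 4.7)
The plan is to combine the multivariate Welch--Berlekamp machinery of Theorem~\ref{t:3DBW} with the discrete-noise outer loop of Algorithm~\ref{algo:1}. Singling out one coordinate, say $x_1$, I would look for a monic univariate error-locating polynomial $e(x_1)$ of degree $t=\rho N$ together with an $m$-variate polynomial $q(x_1,\ldots,x_m)$ of total degree at most $d+t$ satisfying
\[
y_i\,e(x_{1_i}) \;=\; q(x_{1_i},\ldots,x_{m_i}),\qquad 1\le i\le N.
\]
The i.i.d.\ sampling assumption guarantees, with probability $1$, that the $x_{1_i}$'s are distinct, so the set-up of Algorithm~\ref{algo:2} extends verbatim from $m=2$ to general $m$.

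Existence mirrors Claim~\ref{c:WB1}: take $e$ to be the product of $(x_1-x_{1_i})$ over the corrupted indices (padded to degree exactly $t$ with any extra harmless factors), and put $q:=p\cdot e$, whose total degree is at most $d+t$. Uniqueness mirrors Claim~\ref{c:WB2}: given two solutions $(q_1,e_1)$ and $(q_2,e_2)$, the polynomial $R:=q_1 e_2 - q_2 e_1$ of total degree at most $d+2t$ vanishes at every sample point (the $z_i\neq 0$ case is the pointwise argument of Claim~\ref{c:WB2}, and the $z_i=0$ case is absorbed by a general-position argument below), so $R\equiv 0$ and therefore $q_1/e_1=q_2/e_2$. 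A parameter count---$\binom{d+t+m}{m}$ unknowns in $q$ plus $t$ in the monic $e$---yields the sample-size formula stated in the corollary, and Gaussian elimination solves the linear system in $O(N^\omega)$ time.

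The division step goes through because $e$ is univariate: one performs long division of $q$ by $e(x_1)$ in the ring $\mathbb{R}[x_2,\ldots,x_m][x_1]$, treating the remaining variables as ``constants'' exactly as in Step~2 of Algorithm~\ref{algo:2}, at cost $O(N\log N)$. Finally, to absorb the noise parameter $\delta$, I would wrap this multivariate WB subroutine inside the loop of Algorithm~\ref{algo:1}: iterate over additive noise vectors on a chosen subset $S'\subseteq S$ drawn from the discrete noise alphabet, invoke the subroutine on each perturbed copy, and accept the reconstructed $p$ as soon as it lies within $\delta$ of at least a $\rho$-fraction of the leftover samples.

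The main obstacle is the uniqueness step: in the multivariate setting a nonzero polynomial can have infinitely many zeros, so the ``vanishes at $N$ points implies identically zero'' reasoning that drives one-dimensional WB no longer comes for free. The saving grace is the i.i.d.\ random-sampling hypothesis, which ensures that with probability $1$ the sample points impose independent linear conditions on polynomials of total degree $\le d+2t$---this is precisely what forces $R\equiv 0$ and motivates the specific binomial coefficient in the stated sample size.
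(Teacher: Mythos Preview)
Your proposal is correct and follows essentially the same route as the paper: extend the bivariate Welch--Berlekamp of Theorem~\ref{t:3DBW} to $m$ variables via a univariate error-locating polynomial in one singled-out coordinate, then wrap the result in the discrete-noise loop of Algorithm~\ref{algo:1}. The paper's own proof is terser and more hand-wavy on the uniqueness step (where you rightly flag the general-position issue and invoke the i.i.d.\ sampling hypothesis), and it adds a short digression---absent from your sketch---on whether making $e$ multivariate would reduce the sample size or complexity (concluding it does not, with the Appendix example as illustration); but the core argument is the same as yours.
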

\begin{proof}
Following Theorem~\ref{t:3DBW}, we can rewrite its proof for the multidimensional generalization. An interesting question is if there is any advantage to define the error-location polynomial $e$ to be multi-variate (instead of univariate as we previously presented). One ``advantage'' may be decreasing the required number of datapoints or improvement of the complexity. The size of the input data is strictly defined by the given bound on the corrupted data ($t=\rho N$) and the goal polynomial degree ($d$). Thus, there is no sense if the unknown coefficient comes from the highly \textit{degree} polynomial or from the highly \textit{dimension} polynomial, i.e., both options require the same size of sample, as illustrated in the Appendix.

Related to the complexity change, when the error-locating polynomial is multivariate, \textbf{Step 2} of Algorithm~\ref{algo:2} is more challenging since it contains multivariate polynomial division. A related reference is~\cite{Faugère1999} which is the most efficient implementation for the computation of Gr\"{o}bner bases relies on linear algebra. Using Gr\"{o}bner bases we can implement the division at close to $O(NlogN)$ time, as done in Algorithm~\ref{algo:2}.

To finish the proof, there is the need to explain how to deal with the noise. Since we assume only discrete noise, we can dismiss it using the method illustrated at Algorithm~\ref{algo:1}. We consistently insert a vector of possible noise and try to reconstruct the polynomial using Algorithm~\ref{algo:2}.
\end{proof}

\Section{Random Sample with Unrestricted Noise}
\label{s:MultidimensionalDataFitting}

Motivated by applications in vision, Arora and Khot~\cite{AroraKhot2002} studied the univariate polynomial fitting to noisy data using $O(d^2)$ datapoints, where $d$ is the polynomial degree. In this part, we generalized their results to $k$-dimensional data. 

Since our motivation comes from sensor planar aggregation, we will focus on bivariate polynomial reconstruction, where the multivariate proof is symmetric. We assume by rescaling the data that each $x_i,y_i,f(x_i,y_i) \in [-1,1]$. Allowing small noise at every point and large noise occasionally then there may be too many polynomials agreeing with the given data. Thus, given the noise parameter $\delta$, our goal is to find a polynomial $p$ that is a $\delta$-approximation of $f$, i.e., $p$ is $\delta$-close in $\ell_\infty$ norm to the unknown polynomial.

Let $I$ be a set of $d^5$ equally spaced points that cover the interval $[-1,1]$. Given the random sample $S \subset I, |S|=\frac{d^2}{\delta}log(\frac{d}{\delta})$, we approach the reconstruction problem by defining a linear programming system with the fitting polynomial as its solution. To incorporate the constraint that the unknown polynomial must take values of $[-1,1]$, we move to Chebyshev's representation of the polynomial. Thus, each of its coefficients is at most $\sqrt{2}$ (see eq.~\ref{eq:lp2}). We represent Chebyshev's polynomial by $T_i(\cdot),T_j(\cdot)$, and the variables $c_{ij}$ at the system are the Chebyshev coefficients. We construct the LP:
\begin{center}
\begin{eqnarray}
& \text{minimize } \delta  \nonumber\\
s.t. & \nonumber
\end{eqnarray}
\begin{eqnarray}
\label{eq:lp1}
 f(x_k,y_k)-\delta \leq  \sum_i^{n}{\sum_j^{m}{c_{ij}T_i(x_k)T_j(y_k)}} \leq  f(x_k,y_k)+\delta,
 & k=1,...,|S|  
 \end{eqnarray}
 \begin{eqnarray}
 \label{eq:lp2}
  |c_{ij}| \leq \sqrt{2}, & {i=1,...,n ; j=1,...,m} 
  \end{eqnarray}
  \begin{eqnarray}
  \label{eq:lp3}
  |\sum_i^{n}{\sum_j^{m}{c_{ij}T_i(x)T_j(y)}} | \leq 1, & \forall_{x,y} \in I
  \end{eqnarray}
\end{center}

The following Theorem presents our main result for solving the polynomial fitting problem:
\begin{theorem}
\label{t:mainThm}
Let $f$ be an unknown $d$ total degree polynomial with two variables, such that $f(x,y)\in[-1,1]$ when $x,y\in[-1,1]$. Given a noise parameter $\delta>0$, a threshold $\rho>0$, a constant $c>0$ (dependent on the dimension of the data) and a sample $S$ of $O(\frac{d^2}{\delta}log(\frac{d}{\delta}))$ random points $x_i,y_i,z_i \in [-1,1]$ such that $z_i\in[f(x_i,y_i)-\delta,f(x_i,y_i)+\delta]$ for at least $\rho$ fraction of $S$. With probability at least $\frac{1}{2}$ (over the choice of $S$), any feasible solution $p$ to the above LP is $c\delta$-approximation of $f$.
\end{theorem}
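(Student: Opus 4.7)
The plan is to adapt the Arora--Khot univariate scheme, with a bivariate Markov--Bernstein inequality and an $\epsilon$-net / union-bound argument as the two workhorses. First, I would verify that $f$ itself, expanded in the tensorized Chebyshev basis $T_i(x)T_j(y)$, is essentially feasible: the coefficient bound~(\ref{eq:lp2}) holds because a bivariate polynomial of total degree $d$ with sup-norm at most $1$ on $[-1,1]^2$ has Chebyshev coefficients of magnitude at most $\sqrt{2}$; constraint~(\ref{eq:lp3}) holds by hypothesis on $f$; and~(\ref{eq:lp1}) holds with slack $\delta$ on the $\rho$-clean subsample. The Byzantine $(1-\rho)$-fraction is not used here and will be absorbed into the constant $c$ at the probabilistic step.

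Next, fix any feasible $p$ and set $q:=p-f$, a degree-$d$ bivariate polynomial. Constraint~(\ref{eq:lp3}) enforces $|p|\le 1$ on the grid $I$; since $I$ has spacing $O(1/d^5)$ and the bivariate Markov inequality yields $\|\nabla p\|_\infty=O(d^2\|p\|_\infty)$ on $[-1,1]^2$, this discrete bound promotes to $\|p\|_\infty\le 2$ on the full square, hence $\|q\|_\infty=O(1)$. Applying Markov--Bernstein once more to $q$ produces the Lipschitz control $\|\nabla q\|_\infty=O(d^2)$, which is the key quantitative lever connecting pointwise deviation to deviation on a neighborhood.

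The probabilistic heart is the contrapositive: suppose $\|q\|_\infty>c\delta$, attained at some $(x^\star,y^\star)$. By the Lipschitz bound, $|q|>c\delta/2$ on a square $B$ of side $\Omega(\delta/d^2)$ centered at $(x^\star,y^\star)$, which contains $\Omega(\delta^2 d^6)$ of the $\Theta(d^{10})$ grid points of $I\times I$. A uniformly random sample point lies in $B$ with probability $\Omega(\delta^2/d^4)$, and with $|S|=\Theta(\tfrac{d^2}{\delta}\log\tfrac{d}{\delta})$ one obtains $\Omega(\tfrac{\delta}{d^2}\log\tfrac{d}{\delta})$ expected hits, comfortably above the Byzantine budget $(1-\rho)|S|$. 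A clean sample in $B$ then violates~(\ref{eq:lp1}), which only permits a slack of $\delta$, provided $c$ exceeds the product of the Markov constant and a small factor accounting for noise and the $c\delta/2$-vs-$\delta$ comparison.

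The main obstacle is upgrading this fixed-$p$ failure argument to hold \emph{uniformly} over every feasible $p$. I would handle this by building an $\epsilon$-net $\mathcal{N}$ over the feasible region, which is a box in Chebyshev-coefficient space of dimension $\binom{d+2}{2}=O(d^2)$, of size $|\mathcal{N}|=(d/\delta)^{O(d^2)}$; any feasible $p$ lies within sup-norm $O(\delta/d^2)$ of some $\hat p\in\mathcal{N}$ (by the Chebyshev coefficient bound~(\ref{eq:lp2}) applied to the difference), so a $c\delta$-bad $p$ implies a $(c/2)\delta$-bad $\hat p$. Union-bounding the tail probability from the previous paragraph over $|\mathcal{N}|$ contributes exactly the $O(d^2\log(d/\delta))$ factor baked into the sample size, giving constant-probability success and completing the argument.
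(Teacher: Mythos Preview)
Your overall structure diverges from the paper's in a way that introduces a real quantitative gap. The paper does not fix a bad $p$, locate a bad region, and then union-bound over a net of polynomials. Instead it first argues that with high probability the sample $S$ itself forms a spatial $\epsilon$-net of $[-1,1]^2$ with $\epsilon = O(\delta/d^2)$, and then observes that for \emph{any} feasible $p$ the Lipschitz bound $\|\nabla(p-f)\|_\infty = O(d^2)$ (from Bernstein--Markov applied to $p-f$, once Claim~\ref{c:lp1} gives $\|p\|_\infty=O(1)$) together with $|p-f|\le 2\delta$ at every sample point immediately yields $\|p-f\|_\infty \le 2\delta + O(\epsilon d^2) = O(\delta)$. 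This is a single probabilistic event about $S$, after which the conclusion is deterministic and uniform in $p$; no net over coefficient space is needed.

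Your contrapositive route breaks at the counting step. In two dimensions the bad region $B$ has \emph{area} $\Theta(\delta^2/d^4)$, not length $\Theta(\delta/d^2)$, so a uniform sample point lands in $B$ with probability $\Theta(\delta^2/d^4)$. With $|S|=\Theta\bigl(\tfrac{d^2}{\delta}\log\tfrac{d}{\delta}\bigr)$ the expected number of hits is $\Theta\bigl(\tfrac{\delta}{d^2}\log\tfrac{d}{\delta}\bigr)$, which is $o(1)$ --- certainly not ``comfortably above the Byzantine budget $(1-\rho)|S|$'', since that budget is $\Theta\bigl(\tfrac{d^2}{\delta}\log\tfrac{d}{\delta}\bigr)$, larger by a factor of $d^4/\delta^2$. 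Consequently the per-$p$ failure probability is only $\exp\bigl(-\Theta(\tfrac{\delta}{d^2}\log\tfrac{d}{\delta})\bigr)$, which cannot survive a union bound over a net of size $(d/\delta)^{\Theta(d^2)}$; for your scheme to go through you would need $|S|=\Omega\bigl(\tfrac{d^6}{\delta^2}\log\tfrac{d}{\delta}\bigr)$. The remedy is exactly what the paper does: drop the polynomial-net entirely and work directly with the spatial density of the sample, so that one good event about $S$ controls all feasible $p$ at once.
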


\begin{proof}
For our proof, we need Bernstein-Markov inequality which we state below.

\begin{theorem}(Bernstein-Markov~\cite{Ditzian1992})
\label{t:BM}
For a polynomial $P_d$ of total degree $d$, a direction $\xi$ and a bounded convex set $A \subset R^k$
\begin{center}
\begin{eqnarray}
\left\|\displaystyle\frac{\partial}{\partial\xi}P_d\right\|_\infty \leq c_{\text{\tiny{A}}}d^2\left\|P_d\right\|_\infty
\end{eqnarray}
\end{center}
where $c_{\text{\tiny{A}}}$ is independent of $d$ (and dependent on the geometric structure of A). 
\end{theorem}

Let $p=p(x,y)$ be the $d=n+m$ total degree polynomial obtained from taking any solution to the above LP.
We know $p$ exists, i.e., the LP is feasible because the coefficient of $f$ satisfies it. 

Note that although the LP constraint that 
	\[f-\delta \leq p \leq f+\delta
\]
it is NOT immediate that 
	\[\left\|f-p\right\|_\infty\leq \delta
\]
(i.e., $p$ is the $\delta$-approximation of $f$) since eq.~\ref{eq:lp1} stands for the discrete sample of $S$, where here we need to prove the $\delta$-approximation in the continuous state.

\begin{claim}
\label{c:lp1} $\left\|p\right\|_\infty \leq 1 + O(\frac{n^3m+m^3n}{|I|})$.  \end{claim}
Since $\left\|T_i(x)\right\|_\infty=\left\|T_j(y)\right\|_\infty=1$, from Bernstein-Markov (Theorem~\ref{t:BM}), we get $|T'_i(x)|=O(i^2)$. Thus:
\begin{center}
\begin{eqnarray}
\label{eq:Odiv}
 \left|p'_x\right| \leq \sum_i^{n}{\sum_j^{m}|{a_{ij}|\Big(T_i(x)T_j(y)}}\Big)'_x \nonumber\\   \leq\sum_i^{n}{\sum_j^{m}\sqrt{2}O(i^2)} \leq O(n^3m)  \nonumber\\
\end{eqnarray}
\end{center}
From symmetric consideration, we get:
\begin{center}
\begin{eqnarray}
\left|p'_{y}\right| \leq O(nm^3) \nonumber\\
\end{eqnarray}
\end{center}
By construction, $p$ takes all values in $[-1,1]$ for all points in $I$, and the distance (in $x$ direction or $y$ direction) between successive points of $I$ is $2/|I|$ ($I$ is equidistant). The claim follows from the fact that the derivative $p'$ by definition gives the rate of change in $p$: 
$\left\|p\right\|_\infty$ is between two successive points of $I$ that their values $=1$ where the possible change at the interval of length $2/|I|$ is $p'=O(n^3m+m^3n)$. 

\begin{claim}
\label{c:lp2} 
$\left\|p'_x\right\|_\infty, \left\|p'_y\right\|_\infty \leq O((n+m)^2)$. 
\end{claim}
This follows from Bernstein-Markov (Theorem~\ref{t:BM}) and the estimate $\left\|p\right\|_\infty =1+O(1)$.

Let $\epsilon$ denote the largest distance between two successive points out of $(x_1,y_1), . . . ,(x_{|S|},y_{|S|})$. Every interval of size $\epsilon$ contains at least one of the datapoints (forming $\epsilon$-net). With high probability, $\epsilon=O(log|S|/|S|)=O(\frac{\delta}{(n+m)^2})$. Now, $p$ and $f$ are functions satisfying $\left\|p'_x\right\|_\infty,\left\|f'_x\right\|_\infty, \left\|p'_y\right\|_\infty,\left\|f'_y\right\|_\infty \leq O((n+m)^2)$; hence, $\left\|(p-f)'_x\right\|_\infty, \left\|(p-f)'_y\right\|_\infty \leq O((n+m)^2)$.
Due to the LP constraint, $p,f$ differs by at most $\delta$ on the points in the $\epsilon$-net, so we get
\begin{center}
\begin{eqnarray}
 \left\|(p-f)\right\|_\infty \leq 2\delta+O(\epsilon(n+m)^2)=c\delta
\end{eqnarray}
\end{center}
which is the finished proof of Theorem~\ref{t:mainThm};
\end{proof}

{\it Remarks:}
\begin{itemize}
\item If we know that the derivative is bounded by $\Delta$ (i.e., $f'_x,f'_y \leq \Delta$), the above proof that gives us $\frac{\Delta}{\delta}{log \frac{\Delta}{\delta}}$ points is sufficient.
\item The Bernstein-Markov Theorem~\ref{t:BM} also holds for multivariate trigonometric polynomials (see~\cite{Ditzian1992}), thus, we can generalize the above proof also for this class of function. This generalization is important in the scope of wireless sensor networks since the use of trigonometric function is the appropriate way to represent the sensor data behavior (e.g., temperature).
\item The presented method holds only when we assume equidistance or random sampling (as opposed to Section~\ref{s:DiscreteFiniteNoise} that handles any given sample). Otherwise, when the dataset is dense, since we allow $\delta$ perturbation of the data, it can cause a sharp slope in the resulting function although the original data is close to the constant at the sampling interval.
\end{itemize}

\begin{corollary}
Given the set $S$ of $O(\frac{d^2}{\delta}log(\frac{d}{\delta}))$ $k$-dimensional random datapoints and a constant $c(S,k)$ dependent only on the geometry and the dimension of the data, we can reconstruct the unknown polynomial within $c(S,k)\delta$ error in $\ell_\infty$ norm with high probability over the choice of the sample.
\end{corollary}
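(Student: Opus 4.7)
The plan is to lift the bivariate proof of Theorem~\ref{t:mainThm} to arbitrary dimension $k$ by replacing every bivariate object with its $k$-variate tensor analogue and then invoking Bernstein-Markov in its general $\mathbb{R}^k$ form (which Theorem~\ref{t:BM} already provides). First I would take the grid $I \subset [-1,1]^k$ of $d^{O(k)}$ equally spaced points and write the LP over Chebyshev tensor coefficients $c_{i_1,\dots,i_k}$ with basis functions $T_{i_1}(x_1)\cdots T_{i_k}(x_k)$; feasibility is immediate since the true Chebyshev expansion of $f$ (which has coefficients of absolute value at most $\sqrt{2}$) satisfies all three families of constraints. The LP variables are $O(d^k)$ in number, so the system is solvable in polynomial time.

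Next I would reproduce the two structural claims in $k$-variate form. The analogue of Claim~\ref{c:lp1} says $\|p\|_\infty \le 1 + o(1)$: by Bernstein-Markov each partial derivative $\partial p/\partial x_\ell$ is bounded by a polynomial in $d$ with a constant $c_{[-1,1]^k}$, so on any interval of the grid of length $2/|I|^{1/k}$ the value of $p$ can swing by at most the grid spacing times this derivative bound; choosing $|I|$ polynomially large in $d$ (with a $k$-dependent exponent) drives the slack below any desired constant. The analogue of Claim~\ref{c:lp2} then gives $\|\partial p/\partial x_\ell\|_\infty \le C(k)\,d^2$ for each $\ell$, using $\|p\|_\infty = 1 + O(1)$ and Theorem~\ref{t:BM} on $[-1,1]^k$.

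Finally I would close the argument with the $\epsilon$-net step. Let $\epsilon$ be the covering radius of the random sample $S$ in $[-1,1]^k$ under the $\ell_\infty$ metric. For any point $(x_1,\dots,x_k)$, pick a sample point within $\ell_\infty$-distance $\epsilon$; since both $p$ and $f$ have each partial derivative bounded by $C(k)d^2$, their difference changes by at most $k\cdot C(k)d^2\cdot \epsilon$ between the two points, while the LP constraint guarantees $|p-f|\le 2\delta$ at the sample point. So I need $\epsilon = O(\delta/d^2)$, absorbed into the constant $c(S,k)$.

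The main obstacle, and the reason the corollary hides dependencies inside $c(S,k)$, is precisely this $\epsilon$-net step: in dimension $k$ the $\ell_\infty$-covering radius of $|S|$ uniformly random points is of order $(\log|S|/|S|)^{1/k}$, so naively the sample size $O((d^2/\delta)\log(d/\delta))$ suffices only when $k=1$, and in higher dimensions one genuinely needs either a factor depending on $k$ in the exponent of the sample size or a structural hypothesis on $S$ (such as the equidistant/quasi-random case flagged in the third remark after Theorem~\ref{t:mainThm}). I would therefore state the corollary under the convention that $c(S,k)$ absorbs this geometric factor, and note explicitly that for truly uniform samples in $[-1,1]^k$ the bound $c(S,k)$ scales polynomially in $d$ with exponent growing in $k$, while for equidistant grids it remains dimension-free up to a multiplicative constant depending only on the Bernstein-Markov constant $c_{[-1,1]^k}$.
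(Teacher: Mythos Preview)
Your approach is essentially the same as the paper's: lift the bivariate argument of Theorem~\ref{t:mainThm} to $k$ variables by appealing to the multivariate Bernstein--Markov inequality (Theorem~\ref{t:BM}) and absorb the dimensional dependence into the constant $c(S,k)$. The paper's own proof is in fact only three sentences long and simply asserts that the two-dimensional argument carries over, with $c(S,k)$ coming from the Bernstein--Markov constant and growing exponentially in $k$; your write-up is considerably more explicit, and your observation about the $\epsilon$-net covering radius scaling as $(\log|S|/|S|)^{1/k}$ is a genuine subtlety that the paper does not address but tacitly buries inside $c(S,k)$.
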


\begin{proof}
The two-dimensional proof holds for the general dimension, where the approximation accuracy dependent on the constant $c(S,k)$ comes from Theorem~\ref{t:BM}. This constant is independent of the polynomial degree, but dependent on the set of the data points (see~\cite{Ditzian1992}). Note that $c(S,k)$ increases exponentially when increasing the dimension. 
\end{proof}

\subsection*{Byzantine Elimination.}
\label{s:ByzantineElimination}
Arora and Khot~\cite{AroraKhot2002} do not deal with Byzantine inputs; however, the method they presented in Section $6$ can be rewritten to eliminate corrupted data such that the input datapoints will contain only true (but noisy) values. 

Assume that $\rho$ fraction of the data is uncorrupted. For any point $x_i,y_i \in [-1,1]$, consider a small square-interval $\Lambda=[x_i-\frac{\delta}{d^3},x_i+\frac{\delta}{d^3}]\times[y_i-\frac{\delta}{d^3},y_i+\frac{\delta}{d^3}]$ (where $d$ is the total degree of the polynomial we need to find). For a sample of $d^4\frac{log(1/\delta)}{\delta}$ points, with high probability $\Omega(log (d))$ of the samples lie in this square. We are given that $\rho$ fraction of these sample points gives an approximate value of $f(x_i,y_i)$, i.e., the correct value lies in the interval $[f(x_i,y_i)-\delta,f(x_i,y_i)+\delta]$ and the rest of the sample is corrupted and, thus, is NOT in $[f(x_i,y_i)-\delta,f(x_i,y_i)+\delta]$. As shown in Claim~\ref{c:lp2}, the derivatives are bounded by $O(d^2)$; thus, the value of the polynomial is essentially \textit{constant} over $\Lambda$. Hence, at least $\rho$ fraction of the values seen in this square will lie in  $[f(x_i,y_i)-\delta,f(x_i,y_i)+\delta]$ and the rest is irrelevant corrupted data.
Thus, at every point $(x_i,y_i)$, we can reconstruct $f(x_i,y_i)$. The sample is large enough so that we can reconstruct the values of the polynomial at say, $d^2/\delta$ equally spaced points. Now, applying the techniques presented in Section~\ref{s:MultidimensionalDataFitting} enables us to recover the polynomial.

\subsection*{Reconstructing the Multivariate Polynomial.}
To conclude this section, we summarize the presented results in Algorithm~\ref{algo:3}:
\begin{algorithm}
\caption{Reconstruct the polynomial $p(x,y)$ representing the true data }
\begin{algorithmic}
\label{algo:3}
\REQUIRE $S,\rho,d,\delta$
\STATE $S' \leftarrow \emptyset$
\STATE $i \leftarrow 1$
\REPEAT
\STATE $\Lambda=[x_i-\frac{\delta}{d^3},x_i+\frac{\delta}{d^3}]\times[y_i-\frac{\delta}{d^3},y_i+\frac{\delta}{d^3}]$
\STATE $c \leftarrow \frac{z_1+...+z_k}{k},z_j:(x_j,y_i)\in\Lambda$
\STATE $S' \leftarrow \left\{(x_j,y_j,z_j)|(x_j,y_j)\in \Lambda \wedge z_j\approx c\right\}$
\STATE $i \leftarrow  i+1 $
\UNTIL{$|S'|> \frac{d^2}{\delta}$}
\STATE $p(x,y) \leftarrow $LP minimization (Equations~\ref{eq:lp1}-\ref{eq:lp3}) on the set $S'$
\RETURN $p(x,y)$
\end{algorithmic}
\end{algorithm}
The algorithm requires the dataset $S$, the true-data fraction $\rho$, the total degree of the expected polynomial $d$ and the noise parameter $\delta$.
In the first phase, we eliminate the Byzantine occurrence, as described in the former subsection. Assuming the given data lie in $[-1,1]\times[-1,1]$ (or translate to that interval), for the points in $S$, we are looking at the $\frac{\delta}{d^3}$-close interval and choose all the points that have {\it constant} value at this interval (this is done by the average operation). We repeat this process until we collect enough true-datapoints, i.e., at least $\frac{d^2}{\delta}$ points. This set (sign as $S'$ in the algorithm) is the input for the linear-programming equations which finally give us the expected polynomial as proof at Theorem~\ref{t:mainThm}.

\Section{Conclusions}
\label{s:Conclusion}
We have presented the concept of data interpolation in the scope of sensor data aggregation
and representation, as well as the new big data challenge, where abstraction of the data is essential in
order to understand the semantics and usefulness of the data. Interestingly, we found that classical
techniques used in numeric analysis and function approximation, such as the Welsh-Berlekamp
efficient removal of corrupted data, Arora Khot and the like, relate to the data interpolation problem.
Since the sensor aggregation task is usually a collection of inputs from \textit{spatial} sensors, for the first time we have extended existing classical techniques for the case of three or even more function dimensions, finding polynomials that approximate the data in the presence of noise and limited portion of completely corrupted data. 

We believe that the mathematical techniques we have presented have
applications beyond the scope of sensor data collection or big data, in addition to being an
interesting problem that lies between the fields of error-correcting
and the classical theory of approximation and curve fitting.

Throughout the research we have distinguished two different measures
for the polynomial fitting to the Byzantine noisy data problem: the first
being the Welsh-Berlekamp generalization for discrete-noise
multidimensional data and the second being the linear-programming
evaluation for multivariate polynomials.

Approached by the error-correcting code methods, we have suggested a way to
represent a noisy-malicious input with a multivariate polynomial. This
method assumes that the noise is discrete. When the noise is unrestricted, based on Bernstein-Markov Theorem and Arora \& Khot algorithm, we have suggested a method to reconstruct algebraic or trigonometric polynomial that traverses $\rho$ fraction of the the noisy multidimensional data.

We suggest to use polynomial to represent the abstract data since polynomial is dense in the function space on bounded domains (i.e., they can approximate other functions arbitrarily well) and have a simple and compact representation as oppose to spline e.g.,~\cite{spline1989} or others image processing methods.

Directions for further investigation might include the use of
interval computation for representing the noisy data with interval
polynomials.


%



%

\appendix
\section{Appendix}
\label{appendixA}
Given that the goal unknown polynomial $p$ has $m=2$ variable, $deg(p)=1$ and that the data contain $t=2$ Byzantine appearance, we can define the error-correcting polynomial $e$ to be univariate polynomial, $deg(e)=2$ and get the linear equation:
\begin{equation}
\label{eq:bw2Ex}
\alpha_1x^3+\alpha_2x^2y+ \alpha_3xy^2+\alpha_4y^3+\alpha_5x^2+\alpha_6xy+\alpha_7y^2+\alpha_8x+\alpha_9y+\alpha_{10}=z(x^2+\beta_{1}x+\beta_{2})
\end{equation}
when substitute the given data
$$
\text{\small (1,2,2),(-2,6,0;),(2,2,4),(6,1,7),(4,3,7),(9,1,10),(3,7,10),(5,7,12),(7,4,11),(10,3,13),(11,2,13),(12,4,16)}
$$
at (eq.\ref{eq:bw2Ex}) we get:
\begin{eqnarray}
q_1(x,y)=xy - 2y - 2x + x^2y + x^2 + x^3 \nonumber\\
e_1(x)= x^2 + x - 2\nonumber
\end{eqnarray}
Or, by defining $e$ to be bivariate polynomial,$deg(e)=1$:
\begin{equation}
\alpha_1x^3+\alpha_2x^2y+ \alpha_3xy^2+\alpha_4y^3+\alpha_5x^2+\alpha_6xy+\alpha_7y^2+\alpha_8x+\alpha_9y+\alpha_{10}=z(x+\beta_{1}y+\beta_{2})
\end{equation}
which its solution is:
\begin{eqnarray}
q_2(x,y)=x^2 + 7xy/4 - 5x/2 + 3y^2/4 - 5y/2 \nonumber\\
e_2(x,y)=x + 3y/4 - 5/2\nonumber
\end{eqnarray}
At both cases the polynomial devision result is equals and gives the expected solution:
\begin{equation}
q_1(x,y)/e_1(x,y)=q_2(x,y)/e_2(x,y)= x+y = p(x).
\end{equation}

\end{document}